% This is samplepaper.tex, a sample chapter demonstrating the
% LLNCS macro package for Springer Computer Science proceedings;
% Version 2.21 of 2022/01/12
%
\documentclass[orivec,runningheads]{llncs}
\usepackage[T1]{fontenc}
% T1 fonts will be used to generate the final print and online PDFs,
% so please use T1 fonts in your manuscript whenever possible.
% Other font encondings may result in incorrect characters.
%
\usepackage{bm}        % For bold math symbols
\usepackage{amssymb}   % For \mathbb{} (blackboard bold)
\usepackage{amsfonts}
\usepackage{float}
\usepackage{amsmath}
\usepackage{amssymb, amsfonts}

\usepackage{graphicx}    % Pour inclure des images
\usepackage{subcaption}   % Pour gérer les sous-figures

\DeclareMathOperator{\prox}{prox}
\DeclareMathOperator*{\argmin}{argmin}
% Used for displaying a sample figure. If possible, figure files should
% be included in EPS format.
%
% If you use the hyperref package, please uncomment the following two lines
% to display URLs in blue roman font according to Springer's eBook style:
\usepackage{color}
%\renewcommand\UrlFont{\color{blue}\rmfamily}
%\urlstyle{rm}
%

\begin{document}
\author{Na\"il Khelifa\inst{1}\thanks{Work completed during a research visit to DAMTP at the University of Cambridge.}\and Carola-Bibiane Sch\"onlieb\inst{2}\and Ferdia Sherry\inst{2}}
\institute{ENSAE Paris \email{nail.khelifa@ensae.fr}\and DAMTP, University of Cambridge \email{\{cbs31,fs436\}@cam.ac.uk}}
\title{Enhanced Denoising and Convergent Regularisation Using Tweedie Scaling}
\titlerunning{Convergent regularisation in PnP via Tweedie scaling}
%
%\titlerunning{Abbreviated paper title}
% If the paper title is too long for the running head, you can set
% an abbreviated paper title here
% %
% \author{First Author\inst{1}\orcidID{0000-1111-2222-3333} \and
% Second Author\inst{2,3}\orcidID{1111-2222-3333-4444} \and
% Third Author\inst{3}\orcidID{2222--3333-4444-5555}}
% %
% \authorrunning{F. Author et al.}
% % First names are abbreviated in the running head.
% % If there are more than two authors, 'et al.' is used.
% %
% \institute{Princeton University, Princeton NJ 08544, USA \and
% Springer Heidelberg, Tiergartenstr. 17, 69121 Heidelberg, Germany
% \email{lncs@springer.com}\\
% \url{http://www.springer.com/gp/computer-science/lncs} \and
% ABC Institute, Rupert-Karls-University Heidelberg, Heidelberg, Germany\\
% \email{\{abc,lncs\}@uni-heidelberg.de}}
% %
\maketitle              % typeset the header of the contribution
\begin{abstract}

The inherent ill-posed nature of image reconstruction problems, due to limitations in the physical acquisition process, is typically addressed by introducing a regularisation term that incorporates prior knowledge about the underlying image. The iterative framework of Plug-and-Play methods, specifically designed for tackling such inverse problems, achieves state-of-the-art performance by replacing the regularisation with a generic denoiser, which may be parametrised by a neural network architecture. However, these deep learning approaches suffer from a critical limitation: the absence of a control parameter to modulate the regularisation strength, which complicates the design of a convergent regularisation. To address this issue, this work introduces a novel scaling method that explicitly integrates and adjusts the strength of regularisation. The scaling parameter enhances interpretability by reflecting the quality of the denoiser's learning process, and also systematically improves its optimisation. Furthermore, the proposed approach ensures that the resulting family of regularisations is provably stable and convergent.

\keywords{Image Reconstruction  \and Plug-and-Play \and Convergent Regularisation.}
\end{abstract}
\vspace{-3em}
\section{Introduction}
\vspace{-0.5em}
% \subsection{Image reconstruction}

\subsubsection{Image reconstruction.}
In various fields, physical devices capture real-world scenes and convert them into digital form. The image reconstruction (IR) problem involves reversing this process by estimating a realistic input image that best matches the observed data, which is typically a degraded version of the original scene due to limitations in the acquisition system. Mathematically, if $\mathcal{X} = \mathbb{R}^n$ denotes the space of images and $\mathcal{Y} = \mathbb{R}^m$ that of measurements, the physics behind a measurement $y \in \mathcal{Y}$ and its original image $x \in \mathcal{X}$ is commonly modeled by a linear operation, 
\begin{equation}
    y = \mathrm{A}x + \xi,
    \label{physics-general-formula}
\end{equation}
where $\mathrm{A} : \mathcal{X} \rightarrow \mathcal{Y}$ is the \emph{forward operator} and $\xi$ is an additive white Gaussian noise (AWGN). The reconstruction problem writes as the minimisation over $\mathcal{X}$ of a data-fidelity term, often and here taken to be $\frac{1}{2}\|\mathrm{A}x - y\|^2$. However, this problem is commonly \textit{ill-posed}, meaning that its solution is either (or both) not unique or highly sensitive to changes in the initial data. To address this issue, a standard approach is to incorporate regularisation by reformulating the IR problem as
\begin{equation}
    \underset{x \in \mathcal{X}}{\min } ~~ \Big\{ \frac{1}{2}\|\mathrm{A}x - y\|^2 + \lambda g(x) \Big\},
    \label{regularised-version}
\end{equation}
where $g : \mathcal{X} \to \mathbb{R}$ is the regularisation term and $\lambda \geq 0 $ is a term controlling its intensity. By viewing $x$ as the realization of a random variable $X$ with prior distribution $p_X$ and $y$ as the realization of a random variable $Y$ with a conditional density $p_{Y|X}$, the \textit{maximum a posteriori} (MAP) estimate is defined by,
\begin{equation}
    \max_{x \in \mathcal{X}} \, p_{X|Y}(x|y) = \min_{x \in \mathcal{X}} \Big\{ -\log p_{Y|X}(y|x) - \log p_X(x) \Big\}.
    \label{MAP-bayesian-setting}
\end{equation}
Identifying problems \eqref{regularised-version} and \eqref{MAP-bayesian-setting}, the MAP links the data-fidelity term to the likelihood and the regularisation to the prior.
\vspace{-1.5em}
\subsubsection{Plug-and-Play.}
In most cases, problem \eqref{regularised-version} does not have an analytical solution, necessitating the use of iterative optimisation algorithms.  First-order proximal splitting methods \cite{pesquet-combettes-2011} such as Proximal Gradient Descent (PGD, \cite{PGD}), Half-Quadratic Splitting (HQS, \cite{HQS}) and Alternating Direction Method of Multipliers (ADMM, \cite{ADMM}) are a common tool for minimising the sum of two functions. These methods use the proximal operator of a functional $f$, defined as,
$$
\prox_f(x) = \argmin_{z} \left\{ \frac{1}{2} \|x - z\|^2 + f(z) \right\}.
$$
In the context of IR, the Plug-and-Play (PnP) framework \cite{venkatakrishnan-2011} extends these methods by substituting the proximal operator of the regularisation term with a generic Gaussian denoiser, i.e.\ an operator designed to solve the Gaussian denoising task that corresponds to the physics \eqref{physics-general-formula} with $\mathrm{A} = \mathrm{I}_n$ and $\xi$ an AWGN. Initially applied with non-deep denoisers such as BM3D \cite{chan-2016} or pseudo-linear denoisers \cite{gavaskar-2021,nair-2021}, more recent approaches \cite{meinhardt-2017,sun-2019a,ahmad-2020,yuan-2020,sun-2021} rely on denoisers parametrised with deep convolutional neural networks such as DnCNN \cite{zhang-2017a}, IR-CNN \cite{zhang-2017b} or DRUNet \cite{DRUNet}. 
The PnP framework has been applied in conjunction with first-order proximal splitting algorithms yielding PnP-HQS \cite{zhang-2017b}, PnP-ADMM \cite{romano-2017,ryu-2019}, and PnP-PGD \cite{PnP-with-PGD}.
\vspace{-1.5em}
\subsubsection{Problem.}While these deep learning methods yield impressive restoration results, they come with notable limitations. In particular, their interpretation as regularisation terms becomes problematic within the framework of convergent regularisation. These models are typically trained at fixed noise levels, yet their penalising role in a PnP setting, along with the requirement for convergent regularisation, calls for adjustable control over the regularisation strength. Thus, it is desirable to devise a general approach that not only introduces an interpretable intensity control to a pre-trained denoiser but that also leads to convergent regularisation --- the method introduced in \cite{denoiser-boosting} does not, see Section \ref{sec:empirical_conv_reg}.
\vspace{-1.5em}
\subsubsection{Related Works.} In \cite{denoiser-boosting}, the authors propose a methodology for adjusting the regularisation strength of a denoiser by applying a scaling, which is motivated by interpreting the denoiser as the proximal operator of a 1-homogeneous functional. In that work, the scaling is shown to improve IR performance, but the question of convergent regularisation is not studied, and in fact, as we show later in this work, it does not generally give convergent regularisation. The problem of showing convergent regularisation for PnP methods has received some attention recently \cite{haltmeier-ebner,hauptmannConvergentRegularizationInverse2024}. In particular, the work in \cite{haltmeier-ebner} introduces a theoretical framework for constructing a convergent regularisation for the PnP-PGD; however, the authors do not propose a concrete method to implement this construction with generic denoisers. Let us take the chance to contrast this line of work with a different but related line of work on proving that PnP methods define convergent iterations, a topic which has received considerably more attention from various perspectives recently \cite{ryu-2019,nair-2021,HuraultSSVM}: having convergent iterations is necessary, but not sufficient, to make sense of convergent regularisation in our setting.
\vspace{-1.5em}
\subsubsection{Main Contributions.}
This paper bridges the gap between the two contributions in \cite{denoiser-boosting} and \cite{haltmeier-ebner} by introducing a novel framework for constructing convergent regularisation through a scaling transformation of a single pre-trained denoiser, aligning with the setting proposed in \cite{haltmeier-ebner}. Additionally, it is shown that the introduced scale parameter has a highly interesting interpretation. Indeed, a specific value of the scale parameter consistently enhances the performance of deep denoisers trained with the $\mathcal{L}^2$ loss and acts as a quality indicator for the training. 
\vspace{-2em}
\subsubsection{Notations.} Throughout this work, for simplicity and seeing images and measurements as high-dimensional tensors, let $\mathcal{X} = \mathbb{R}^n$ and $ \mathcal{Y} = \mathbb{R}^m$. The noise level is denoted $\sigma > 0$. Let $\xi$ be a random variable distributed according to $\mathcal{N}(0, \mathrm{I}_n)$ and let $\mathbf{D} : \mathcal{Y} \rightarrow \mathcal{X}$ denote a Gaussian denoiser trained to minimise the $\mathcal{L}^2$ loss at a noise level $\sigma$. Let also \( p_X \) denote the prior distribution of clean images over \( \mathcal{X} \), \( G_\sigma \) the density of the Gaussian distribution \(\mathcal{N}(0, \sigma^2 \mathrm{I}_n)\), and \( p_\sigma = p_X * G_\sigma \) the distribution of noisy images at noise level \(\sigma\) over \( \mathcal{Y} \). The minimum mean square error (MMSE) denoiser, is denoted and defined by
\begin{equation*}
\mathbf{D}^{\text{MMSE}} :=\argmin_{{\mathbf{D} \text{ measurable}}}\mathbb{E}_{X, Y}[\|\mathbf{D}(Y) - X \|^2] = \mathbb{E}[X|Y].
\end{equation*}
For any denoiser $\mathbf{D}$, the mean squared error of Gaussian denoising is denoted,
$$
\mathcal{L}^2(\mathbf{D}) = \mathbb{E}_{X, \xi}[\|\mathbf{D}(X + \sigma \xi) - X\|^2].
$$
\vspace{-3em}
\subsubsection{Outline.} The first section presents the proposed scaling-based regularisation method, providing a justification for its theoretical foundation and underlying intuition. The second section explores the interpretation of the scaling parameter that governs the introduced regularisation. Finally, the last section demonstrates that the resulting family of regularisations forms a convergent regularisation.

\section{Construction}

In this section, the aim is twofold: introduce a scaling-based control parameter for a pre-trained denoiser and provide insights into the meaning of the parameter.
\vspace{-2em}
\subsubsection{Homogeneous scaling.} In \cite{denoiser-boosting}, the scaling approach is motivated by the observation that, if $h : \mathcal{X} \rightarrow \mathbb{R}$ is a 1-homogeneous function (i.e.\ $h(\tau u) = \tau h(u)$ for all $\tau > 0, u \in \mathcal{X}$) whose proximal operator is well-defined, then for all $x \in \mathcal{X}$,
\begin{equation*}
    \prox_{\tau h} (x) = \tau \prox_h \left(\frac{x}{\tau}\right).
    \label{prox-scaling-formula}
\end{equation*}
This implies that if the initial denoiser is given by $\mathbf{D} = \operatorname{prox}_h$ for some 1-homogeneous $h$, then, for all $\delta > 0$ and for all $x \in \mathcal{X}$,
\begin{equation*}
    \frac{1}{\delta}\mathbf{D}(\delta x)
    = \frac{1}{\delta} \operatorname{prox}_h(\delta x) = \operatorname{prox}_\frac{h}{\delta}(x).
\end{equation*}
From this approach, henceforth referred to as \textit{homogeneous scaling}, we aim to retain the concept of a scaling transformation while discarding the strong assumption on the proximal expression of $\mathbf{D}$.
\vspace{-1.5em}
\subsubsection{Proposed Approach.} The proposed approach builds upon Tweedie’s identity \cite{efron-2011}, which states,  
\begin{equation}
\mathbf{D}^\text{MMSE} = \operatorname{Id} + \sigma^2 \nabla \log p_\sigma.
\label{tweedie-identity}
\end{equation}
The objective is to define a method ensuring convergent regularisation. This will be achieved by studying the behaviour of the identity as \( \sigma \) approaches zero. The idea here is to provide an intuition about the proposed method. Assuming \( p_X \) is sufficiently smooth ($\mathcal{C}^3$ suffices, for example), \( p_\sigma \) satisfies the heat equation with time variable \( t = \sigma^2 \),
\begin{align}\partial_{\sigma^2} p_\sigma = \mathrm{\Delta} p_\sigma, \qquad \partial_{\sigma^2} \nabla p_\sigma = \nabla (\mathrm{\Delta} p_\sigma).
    \label{heat-eq}
\end{align}

Using the first-order Taylor expansion of \( p_\sigma \) and \( \nabla p_\sigma \) around \( \sigma^2 = 0 \), and incorporating \eqref{heat-eq}, the following expressions are obtained,
\begin{align*}
\begin{cases}
    p_\sigma = p_X + \partial_{\sigma^2} p_\sigma \big|_{\sigma^2=0} \sigma^2 + \mathcal{O}_0(\sigma^4) = p_X + \sigma^2 \mathrm{\Delta} p_X + \mathcal{O}_0(\sigma^4), \\
    \\
    \nabla p_\sigma = \nabla p_X + \partial_{\sigma^2} \nabla p_\sigma \big|_{\sigma^2=0} \sigma^2 + \mathcal{O}_0(\sigma^4) = \nabla p_X + \sigma^2 \nabla \mathrm{\Delta} p_X + \mathcal{O}_0(\sigma^4).
\end{cases}
\end{align*}

As \( \sigma \to 0 \), neglecting higher-order terms in the expansion yields,
\begin{align*}
    \nabla \log p_\sigma = \frac{\nabla p_\sigma}{p_\sigma} = \frac{\nabla p_X + \sigma^2 \nabla \mathrm{\Delta} p_X}{p_X + \sigma^2 \mathrm{\Delta} p_X} = \frac{\nabla p_X}{p_X} + \mathcal{O}_0(\sigma^2) \sim_{\sigma \to 0} \nabla \log p_X.
\end{align*}

Thus, for Tweedie’s identity, it follows that,
\begin{align*}
    \operatorname{Id} + \sigma^2 \nabla \log p_\sigma \sim_{\sigma \to 0} \operatorname{Id} + \sigma^2 \nabla \log p_X.
\end{align*}

This leads to the introduction of a scaling parameter to control regularisation, expressed in Tweedie’s identity as,
\begin{align*}
    \operatorname{Id} + \frac{\sigma^2}{\delta^2} \nabla \log p_X =  \operatorname{Id} + \frac{1}{\delta^2} \left( \operatorname{Id} + \sigma^2 \nabla \log p_X - \operatorname{Id} \right)
    \sim_{\sigma \to 0} \operatorname{Id} + \frac{\mathbf{D}^\text{MMSE} - \operatorname{Id}}{\delta^2}.
\end{align*}
The key observation is that training a denoiser \( \mathbf{D} \) to minimise the \( \mathcal{L}^2 \)-loss for a noise level \( \sigma \) effectively approximates \( \mathbf{D}^\text{MMSE} \). By substituting \( \mathbf{D} \) for \( \mathbf{D}^\text{MMSE} \) in the above expression, the \textit{Tweedie Scaling Method} is defined as,
\begin{equation*}
\mathbf{D}_\delta = \operatorname{Id} + \frac{\mathbf{D} - \operatorname{Id}}{\delta^2}.
\end{equation*}
This formulation assumes no specific structure for \( \mathbf{D} \) and relies on two key approximations: (1) the approximation of \( \mathbf{D} \) as \( \mathbf{D}^\text{MMSE} \) through proper \( \mathcal{L}^2 \)-training, and (2) the validity of expansions as \( \sigma \to 0 \). In fact, as we will show in Section \ref{sec:conv_reg_prop}, this scaling approach can result in convergent regularisation even in a setting in which we do not assume that the denoiser is close to being optimal.

\vspace{-1em}
\section{Enhanced denoising and interpretation of $\delta$}
\label{sec:interpretation_scaling}
\vspace{-1em}
To interpret \( \delta \), recall that the construction of \( \mathbf{D}_\delta \) in the previous section started with an approximation of \( \mathbf{D}^\text{MMSE} \) by \( \mathbf{D} \), where the accuracy was related to the performance of \( \mathbf{D} \) in minimizing the \( \mathcal{L}^2 \) loss, or equivalently, to the tightness of the following bound --- which holds by definition of \( \mathbf{D}^\text{MMSE} \):
\begin{equation}
    \mathcal{L}^2(\mathbf{D}_\text{MMSE}) \leq \mathcal{L}^2(\mathbf{D})
    \label{l2-loss-mmse}
\end{equation}
To proceed with the interpretation, this bound is refined by the following result,
\begin{proposition}
    For 
\begin{align*}
    \delta^2_\mathrm{opt} = - \frac{\mathbb{E}[\|\mathbf{D}(X + \sigma \xi) - (X + \sigma \xi)\|^2]}{\mathbb{E}[\langle \sigma \xi, \mathbf{D}(X + \sigma \xi) - (X + \sigma \xi) \rangle]},
\end{align*}
it holds that,
\begin{equation}
    \mathcal{L}^2(\mathbf{D}_\mathrm{MMSE}) \leq \mathcal{L}^2(\mathbf{D}_{\delta_\mathrm{opt}}) \leq \mathcal{L}^2(\mathbf{D})
    \label{improving-formulation}
\end{equation}
\end{proposition}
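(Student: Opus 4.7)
The plan is to observe that $\mathbf{D}_\delta$ depends on the scaling only through $u := 1/\delta^2$, and that $\mathcal{L}^2(\mathbf{D}_\delta)$ is a convex quadratic in this variable whose unconstrained minimiser is precisely $1/\delta_{\mathrm{opt}}^2$. The left inequality $\mathcal{L}^2(\mathbf{D}^{\mathrm{MMSE}}) \le \mathcal{L}^2(\mathbf{D}_{\delta_{\mathrm{opt}}})$ is immediate, since $\mathbf{D}^{\mathrm{MMSE}}$ is by definition the minimiser of $\mathcal{L}^2$ over all measurable denoisers, and in particular is at least as good as $\mathbf{D}_{\delta_{\mathrm{opt}}}$.

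For the right inequality, I would write $Y := X + \sigma\xi$ and start from the decomposition
\[ \mathbf{D}_\delta(Y) - X = (Y - X) + \tfrac{1}{\delta^2}\bigl(\mathbf{D}(Y) - Y\bigr) = \sigma\xi + \tfrac{1}{\delta^2}\bigl(\mathbf{D}(Y) - Y\bigr). \]
Expanding the squared norm and taking expectations yields, with $u = 1/\delta^2$,
\[ f(u) := \mathcal{L}^2(\mathbf{D}_\delta) = n\sigma^2 + 2u\,\mathbb{E}\!\left[\langle \sigma\xi,\, \mathbf{D}(Y)-Y\rangle\right] + u^2\,\mathbb{E}\!\left[\|\mathbf{D}(Y)-Y\|^2\right]. \]
This is a convex quadratic with non-negative leading coefficient, so its critical point
\[ u^\star = -\frac{\mathbb{E}[\langle \sigma\xi,\, \mathbf{D}(Y) - Y\rangle]}{\mathbb{E}[\|\mathbf{D}(Y) - Y\|^2]} \]
is its unconstrained minimiser, and one reads off $u^\star = 1/\delta_{\mathrm{opt}}^2$. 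Noticing that $u = 1$ corresponds to $\mathbf{D}_1 = \mathbf{D}$, the minimality of $u^\star$ gives $\mathcal{L}^2(\mathbf{D}_{\delta_{\mathrm{opt}}}) = f(u^\star) \le f(1) = \mathcal{L}^2(\mathbf{D})$, as desired.

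The main obstacle is that this argument only yields a bona fide scaling if $u^\star > 0$, i.e.\ if $\mathbb{E}[\langle \sigma\xi,\, \mathbf{D}(Y)-Y\rangle] < 0$ (otherwise the unconstrained minimiser lies outside the admissible range $\delta^2 > 0$). I would justify this through Stein's lemma: since $\mathbb{E}[\langle \sigma\xi,\, Y\rangle] = n\sigma^2$ and $\mathbb{E}[\langle \sigma\xi,\, \mathbf{D}(Y)\rangle] = \sigma^2\,\mathbb{E}[\nabla\cdot \mathbf{D}(Y)]$, the numerator equals $\sigma^2\bigl(\mathbb{E}[\nabla\cdot \mathbf{D}(Y)] - n\bigr)$, which is negative whenever $\mathbf{D}$ is on average contractive, a property satisfied by $\mathbf{D}^{\mathrm{MMSE}}$ (for which $\nabla\cdot \mathbf{D}^{\mathrm{MMSE}} = n + \sigma^2 \Delta \log p_\sigma$ with $\Delta \log p_\sigma \le 0$ under log-concavity, or on average by the positivity of the associated Fisher information) and expected of any reasonably trained Gaussian denoiser. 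Under this mild assumption, the right inequality follows, and is in fact strict unless $u^\star = 1$, i.e.\ unless $\mathbf{D}$ is already optimally scaled.
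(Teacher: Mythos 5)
Your proof is correct and takes the route the paper intends: the paper's entire proof is the remark that the result ``follows from studying the function $\delta \mapsto \mathcal{L}^2(\mathbf{D}_\delta)$'', and your expansion of this loss as a convex quadratic in $u = 1/\delta^2$ with unconstrained minimiser $u^\star = 1/\delta_{\mathrm{opt}}^2$, compared against $u=1$ (which recovers $\mathbf{D}$) and bounded below by the MMSE optimality, is precisely that study carried out in full. Your additional Stein's-lemma discussion of the sign of $u^\star$ goes beyond what the paper records and is a worthwhile point about $\delta_{\mathrm{opt}}^2$ being a legitimate positive scaling, though the inequality chain itself already holds at the unconstrained minimiser regardless of its sign.
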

\begin{proof}
    The proof directly follows from studying the function \( \delta \mapsto \mathcal{L}^2(\mathbf{D}_\delta) \).
\end{proof}
Now, If equality holds in \eqref{l2-loss-mmse}, perfect training of $\mathbf{D}$ follows, implying:
\begin{align*}
    \mathcal{L}^2(\mathbf{D}_\text{MMSE}) = \mathcal{L}^2(\mathbf{D}_{\delta_\text{opt}}) = \mathcal{L}^2(\mathbf{D}) \implies \mathbf{D}_{\delta_\text{opt}} = \mathbf{D} \iff \delta_\text{opt}^2 = 1
\end{align*}
Thus, the value of \( \delta_\text{opt} \) reveals the quality of the initial denoiser’s training. It is noteworthy that, when \( \mathbf{D} \) is not perfectly trained, it is possible to construct a denoiser with a lower theoretical training error simply by choosing the appropriate value of the transformation parameter.

\section{Plug-and-play properties of $(\mathbf{D}_\delta)_\delta$} 
\vspace{-1em}
Attention is now directed towards the behaviour of $(\mathbf{D}_\delta)_\delta$ within a Plug-and-Play framework. Specifically, it is first demonstrated that the Tweedie scaling method (1) defines convergent iterations for PnP-PGD algorithms under mild assumptions and (2) establishes a convergent regularisation for the original problem, again under mild assumptions.
\vspace{-1.5em}
\subsection{Convergence of iterations}
The focus is on the convergence of iterations within the PnP-PGD \cite{PnP-with-PGD} algorithm which, for a mean square data-fidelity term and considering $\mathbf{D}_\delta$, iterates as:
\begin{align}
    x^{(i+1)} = \mathbf{D}_\delta(x^{(i)} - \tau \mathrm{A}^\top(\mathrm{A}x^{(i)} - y)) := \mathbf{D}_\delta(\mathbf{G}(x^{(i)})) := \mathbf{T}_\delta(x^{(i)})
    \label{PnP-PGD-iterations}
\end{align}
where $\mathbf{G}(x) := x - \tau \mathrm{A}^\top (\mathrm{A}x - y)$ denotes the gradient-step operator and $\mathbf{T}_\delta(x) := \left(1 - 1/\delta^2\right)\mathbf{G}(x) + (\mathbf{D} \circ \mathbf{G})(x)/\delta^2$. In this setting, we have the following result:

\begin{proposition}[$\mathbf{T}_\delta$ defines convergent iterations]
    Assume that the step size $\tau$ is chosen with $0 < \tau \leq 1/\|\mathrm{A}^\top  \mathrm{A}\|$ and that $\mathbf{D}$ is non-expansive. Then,  $\mathbf{T}_\delta$ is $\frac{\delta^2}{2\delta^2 - 1}$-averaged. If $\mathbf{T}_\delta$ has at least one fixed point and $\delta^2 > 1$, then the iterations ~\eqref{PnP-PGD-iterations} converge to a fixed point of $\mathbf{T}_\delta$.
    \label{convergence-of-iterations}
\end{proposition}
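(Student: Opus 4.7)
The plan is to recognize that $\mathbf{T}_\delta = \mathbf{D}_\delta \circ \mathbf{G}$ is a composition of two averaged operators, establish averagedness coefficients for $\mathbf{G}$ and $\mathbf{D}_\delta$ separately, combine them via the standard composition formula, and then invoke a Krasnoselskii--Mann-type convergence result.

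First, I would show that the gradient step $\mathbf{G}$ is $1/2$-averaged under the stated step-size condition. Since $f(x) = \frac{1}{2}\|\mathrm{A}x - y\|^2$ is convex with $\|\mathrm{A}^\top \mathrm{A}\|$-Lipschitz gradient, the Baillon--Haddad theorem implies that $\nabla f$ is $(1/\|\mathrm{A}^\top \mathrm{A}\|)$-cocoercive, and a standard consequence is that $\operatorname{Id} - \tau \nabla f$ is $(\tau \|\mathrm{A}^\top \mathrm{A}\|/2)$-averaged for $\tau \in (0, 2/\|\mathrm{A}^\top \mathrm{A}\|]$. Under the hypothesis $\tau \leq 1/\|\mathrm{A}^\top \mathrm{A}\|$, this gives that $\mathbf{G}$ is (at most) $1/2$-averaged.

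Next, since $\mathbf{D}_\delta = (1 - 1/\delta^2)\operatorname{Id} + (1/\delta^2)\mathbf{D}$ is by construction a convex combination of $\operatorname{Id}$ and the non-expansive operator $\mathbf{D}$ (using $\delta^2 > 1$ so that $1/\delta^2 \in (0,1)$), it is $1/\delta^2$-averaged directly from the definition of averagedness. Applying the standard composition formula, according to which the composition of an $\alpha_1$-averaged and an $\alpha_2$-averaged operator is $\frac{\alpha_1 + \alpha_2 - 2\alpha_1 \alpha_2}{1 - \alpha_1 \alpha_2}$-averaged, with $\alpha_1 = 1/\delta^2$ and $\alpha_2 = 1/2$ yields, after simplification, exactly the claimed constant $\frac{\delta^2}{2\delta^2 - 1}$.

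For the convergence statement, I would note that $\frac{\delta^2}{2\delta^2 - 1} < 1$ holds precisely when $\delta^2 > 1$, so under the hypotheses $\mathbf{T}_\delta$ is genuinely averaged (with coefficient strictly below $1$). Combining this with the assumed existence of a fixed point, the Krasnoselskii--Mann theorem for averaged operators yields convergence of the iteration $x^{(i+1)} = \mathbf{T}_\delta(x^{(i)})$ to a fixed point of $\mathbf{T}_\delta$ (strong convergence in the finite-dimensional setting used here). The main obstacle is merely bookkeeping: identifying the correct averagedness coefficients of $\mathbf{G}$ and $\mathbf{D}_\delta$ and verifying that the composition formula simplifies to the stated expression; there is no deep technical difficulty beyond this algebraic check.
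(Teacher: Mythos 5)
Your proposal is correct and follows essentially the same route as the paper: establish that $\mathbf{D}_\delta$ is $1/\delta^2$-averaged as a convex combination of $\operatorname{Id}$ and the non-expansive $\mathbf{D}$, that $\mathbf{G}$ is $1/2$-averaged under the step-size condition, combine via the standard composition formula to get the $\delta^2/(2\delta^2-1)$ coefficient, and conclude with Krasnosel'skii--Mann. The only cosmetic difference is that you derive the averagedness of $\mathbf{G}$ from Baillon--Haddad, whereas the paper simply cites firm non-expansiveness of the gradient step from Bauschke--Combettes.
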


\begin{proof}
The proof follows in three steps: 
\begin{enumerate}
    \item Recalling that an operator \(\mathbf{O}\) is said to be \(\theta\)-averaged for \(\theta \in (0, 1)\) if it can be expressed as $\mathbf{O} = \theta \mathbf{R} + (1 - \theta) \operatorname{Id}$, where \(\mathbf{R}\) is non-expansive, and recalling that $\mathbf{D}_\delta = \left(1 - 1/\delta^2\right)\operatorname{Id} + \mathbf{D}/\delta^2$, one directly gets that, assuming $\mathbf{D}$ to be non-expansive, implies that $\mathbf{D}_\delta$ is $1/\delta^2$-averaged.
    \item Moreover, it is recalled from \cite{bauschke-combettes} that  if \(\mathbf{O}_1\) and \(\mathbf{O}_2\) are \(\theta_1\)-averaged and \(\theta_2\)-averaged operators, respectively, then their composition \(\mathbf{O}_1 \circ \mathbf{O}_2\) is \(\theta\)-averaged with
    \[
    \theta = \frac{\theta_1 + \theta_2 - 2\theta_1\theta_2}{1 - \theta_1\theta_2}.
    \]
    \noindent
    Combining this, the $1/\delta^2$-averagedness of $\mathbf{D}_\delta$ and, the result from \cite{bauschke-combettes} that states that under the stepsize condition \(0 < \tau \leq 1/\|\mathrm{A}^\top \mathrm{A}\|\), \(\mathbf{G}\) is firmly non-expansive, i.e.\ $1/2$-averaged and the fact that  \(\mathbf{T}_\delta = \mathbf{D}_\delta \circ \mathbf{G}\) it follows that \(\mathbf{T}_\delta\) is \(\delta^2/(2\delta^2 - 1)\)-averaged. 
    \item According to a corollary of the Krasnosel'skii-Mann theorem \cite{bauschke-combettes}, if \(\mathbf{T}_\delta\) admits at least one fixed point and if \(\delta^2/(2\delta^2 - 1) \in (0, 1)\)—a condition equivalent to \(\delta^2 > 1\)—then the sequence of iterations \eqref{PnP-PGD-iterations} converges to a fixed point.
\end{enumerate}
\end{proof}

The central assumption underlying this result is the non-expansiveness of \(\mathbf{D}\), which is influenced by the architecture of \(\mathbf{D}\). In the experiments, we will use a standard neural network denoiser, the DRUNet (see Section \ref{sec:experiments} for more details). As demonstrated in Table \ref{empirical-test-of-non-expansiveness}, the maximum value of \(\|\mathbf{D}(y_1) - \mathbf{D}(y_2)\|/\|y_1 - y_2\|\) for two images $y_1, y_2$ from the (noisy) CBSD68 patch-extend dataset is below one for sufficiently large noise levels, supporting the assumption of non-expansiveness.
\vspace{-0.5cm}
\begin{table}[ht]
    \caption{\centering $\max\{\frac{\|\mathbf{D}(y_1) - \mathbf{D}(y_2)\|_2}{\|y_1 - y_2\|_2} | y_1, y_2 \in \mathcal{D}_{\mathrm{CBSD68}}(\sigma^2)\}$ as the noise level $\sigma^2$ varies.}\label{empirical-test-of-non-expansiveness}
    \centering
    \begin{tabular}{|c|c|c|c|c|c|c|c|c|c|}
        \hline
         & $\sigma = 0.01$ & $\sigma = 0.03$ & $\sigma = 0.05$ & $\sigma = 0.1$ & $\sigma = 0.15$ & $\sigma = 0.2$ & $\sigma = 0.25$ & $\sigma = 0.3$\\
         \hline
        $\text{max}$ & $1.005$& $0.997$ & $0.994$ & $0.985$ & $0.966$ & $0.950$ & $0.919$ & $0.905$\\ 
        \hline
    \end{tabular}
\end{table}
\vspace{-1.5em}
\subsection{Convergent Regularisation Properties}
\label{sec:conv_reg_prop}
The properties of convergent regularisation are now considered. In \cite{haltmeier-ebner}, a theoretical framework for studying convergent regularisation in PnP methods is introduced, but no general method is provided for constructing it from a pre-trained denoiser \(\mathbf{D}\). This work demonstrates that the proposed Tweedie scaling method defines a convergent regularisation that satisfies the framework in \cite{haltmeier-ebner}.\\
For $\delta > 0$ and noisy measurements $y$, the set of fixed points of the iterations of the PnP-PGD algorithm is defined as follows,
\begin{equation}
    \operatorname{PnP}(\delta, y) := \operatorname{Fix}(\mathbf{D}_\delta \circ (\mathrm{Id} - \tau \mathrm{A}^\top (\mathrm{A}(\cdot) - y))),
\end{equation}
where $\operatorname{Fix}(\mathbf{T}) = \{x \in \mathcal{X} ~ | ~ \mathbf{T}(x) = x\}$. By adapting the result of Definition 2.5 in \cite{haltmeier-ebner} to our case, $(\mathbf{D}_\delta)_\delta$  define a convergent regularisation  for the problem $\argmin_{x} \frac{1}{2}\|\mathrm Ax - y \|^2$ over $E \subseteq \mathcal{X}$, if the following conditions are satisfied,
\begin{itemize}
    \item[(a)] \textbf{Stability:} For all $\delta > 0$, $\mathbf{D}_\delta$ is continuous.
    \item[(b)] \textbf{Convergence:} For all $y \in E$, $(1/\delta_k^2)_{k \in \mathbb{N}} \subset (0,\infty)^{\mathbb{N}}$ converging to $0$ and all $y_k \in \mathcal{Y}$ with $\|y - y_k\| \leq 1/\delta_k^2$, the sequence $(\operatorname{PnP}(1/\delta_k^2, y_k))_{k \in \mathbb{N}}$ has a convergent subsequence and the limit of every convergent subsequence of $(\operatorname{PnP}(1/\delta_k^2, y_k))_{k \in \mathbb{N}}$ is a minimiser of $x \mapsto\frac{1}{2}\|\mathrm{A}x - y\|^2$.
\end{itemize}

\vspace{-1.5em}
\subsubsection{Stability}
To demonstrate stability, Theorem 3.10 and Corollary 3.11 from \cite{haltmeier-ebner} are considered. In the statement of these results, four conditions are imposed: a contractivity condition on $\mathbf{D}_\delta$ and three conditions on the data-fidelity function. It has been shown in \cite{haltmeier-ebner} that the data-fidelity between a clean image $x$ and its noisy measurements $y$  defined by $\frac{1}{2}\|\mathrm{A}x - y\|^2$ satisfies these conditions. Thus, the only assumption that remains to be verified is the contractivity of $\mathbf{D}_\delta$ for all $\delta>0$, which is equivalent to the contractivity of the original denoiser $\mathbf{D}$. By directly applying Corollary 3.11 from \cite{haltmeier-ebner}, it follows that
\begin{equation*}
    \text{\textit{Assuming that} }\mathbf{D}_\delta\text{ \textit{is contractive,} }x \mapsto \operatorname{PnP}(\delta, x)\text{ \textit{is continuous for all} }\delta > 0.
\end{equation*}
\vspace{-3.5em}
\subsubsection{Convergence}
To prove convergence, Theorem 3.14 from \cite{haltmeier-ebner} is applied, requiring that $(\mathbf{D}_\delta)_\delta$ be an admissible family of denoisers, defined by,
\begin{center}
\begin{itemize}
    \item[] $ \quad$ \textbf{(A1)} $\quad \forall \delta > 0$, $\mathbf{D}_\delta(\cdot): \mathbb{R}^n \rightarrow \mathbb{R}^n$ is a contraction.
    \item[] $ \quad$ \textbf{(A2)} $\quad \mathbf{D}_\delta(\cdot) \longrightarrow \mathrm{Id}_{\mathbb{R}^n}$ strongly, pointwise as $\delta \to +\infty$.
    \item[] $ \quad$ \textbf{(A3)} $\quad \mathbf{D}_\delta(\cdot) \longrightarrow \mathrm{Id}_{\mathbb{R}^n}$ weakly, uniformly over bounded sets as $\delta \to +\infty$.
    \item[] $ \quad$ \textbf{(A4)} $\quad \exists E \subseteq \mathbb{R}^n \, \forall x \in E: \|\mathbf{D}_\delta(x) - x\| = \mathcal{O}(1 - \operatorname{Lip}(\mathbf{D}_\delta)).$
\end{itemize}
\end{center}

The first condition is overly restrictive regarding the architecture defining $\mathbf{D}$, and it is desirable to move beyond it. In \cite{haltmeier-ebner}, a rescaling method is proposed to relax the contractivity condition \textbf{(A1)} to non-expansiveness. More precisely, it is suggested that if the family is non-expansive and satisfies \textbf{(A2)} and \textbf{(A3)}, then it is always possible to construct a family of admissible denoisers by considering \( (\gamma(\frac{1}{\delta^2}) \mathbf{D}_\delta)_{\delta > 0} \), where \( \gamma : [0, \infty) \to (0, 1] \) is a strictly decreasing and continuous function at zero, satisfying \( \gamma(0) = 1 \), and where \( \|\mathbf{D}_\delta(x) - x\| = \mathcal{O}(1 - \gamma(\frac{1}{\delta^2})) \) as \( \frac{1}{\delta^2} \to 0 \). For instance, a valid choice for \( \gamma \) is \(\gamma\left(1/\delta^2\right) = \delta^2/(1 + \delta^2)\). It is noted in \cite{haltmeier-ebner} that if conditions \textbf{(A2)} and \textbf{(A3)} are satisfied by \((\mathbf{D}_\delta)_{\delta > 0}\), then they are also satisfied by \((\gamma(\frac{1}{\delta^2}) \mathbf{D}_\delta)_{\delta > 0}\), and in fact this rescaled family then satisfies \textbf{(A4)}. Consequently, in what follows, by a slight abuse of notation, \(\mathbf{D}_\delta\) will be identified with this rescaled version. Thus, only conditions \textbf{(A2)} and \textbf{(A3)} need to be verified, since it was shown in the previous section that condition \textbf{(A1)}, now relaxed to non-expansiveness, is empirically satisfied. Since we have assumed to be in a finite-dimensional setting, weak and strong convergence coincide, so it is clear that \textbf{(A3)} implies \textbf{(A2)}, and we only need to prove \textbf{(A3)}:\\

\noindent
\textbf{(A3)}: Let \( B \subseteq \mathbb{R}^n \) be a bounded set with $R = \sup_{x \in B} \|x\|$. For any \( x \in B \),

\[
\| \mathbf{D}_\delta (x) - x \| = \left\| \frac{\mathbf{D}(x) - x}{\delta^2} \right\|
\leq \frac{\|\mathbf{D} - \operatorname{Id} \|}{\delta^2} \| x \| \leq \frac{R \|\mathbf{D} - \operatorname{Id} \|}{\delta^2}
\]
Now, the bound on the right-hand side is uniform over all \( x \in B\) and tends to zero as \( \delta \to \infty \), proving uniform convergence on bounded sets.\\

\noindent
Hence, $(\mathbf{D}_\delta)_\delta$ is admissible, completing the proof that the proposed family defines a convergent regularisation.
\vspace{-1em}
\section{Experiments}
\label{sec:experiments}
\vspace{-0.5em}

In all conducted experiments, the denoisers correspond to a DRUNet \cite{DRUNet} architecture with 5 blocks, trained to denoise images corrupted by noise levels \( \sigma\). To explicitly control the noise level, a dataset is constructed by augmenting the Berkeley segmentation dataset CBSD68 \cite{CBSD} through random erasing and random cropping operations, resulting in 8,000 images. Specifically, for each image \( x \in \mathcal{D}_\text{CBSD68} \), its noisy measurements are generated as \( y = \mathrm{A}x + \sigma \xi \), where \( \xi \sim \mathcal{N}(0, \mathrm{I}_n) \) and $\mathrm{A}$ depends on the typical task. Unless stated otherwise, as in the section on \( \delta_\text{opt}^2 \) where the noise level \( \sigma \) varies, it is set to \( 0.1 \). 
\vspace{-1em}
\subsection{Optimal Value $\delta_\text{opt}$}
In this section, the results of Section \ref{sec:interpretation_scaling} concerning the optimal scaling parameter \( \delta_\text{opt} \) will be studied experimentally, and its interpretation as an indicator of the training quality of the denoiser \( \mathbf{D} \) will be validated.
Since the denoising task becomes harder to learn at lower noise levels, given fixed training conditions (architecture, data, and training hyperparameters), a denoiser trained on lower noise levels is expected to be less effectively trained. In this experiment, we consider four noise levels, \( \sigma_1 < \sigma_2 < \sigma_3 < \sigma_4 \), and four denoisers, \( \mathbf{D}_1, \mathbf{D}_2, \mathbf{D}_3, \mathbf{D}_4 \), trained under the fixed procedure described earlier. As such, \( \mathbf{D}_1 \) is the least well-trained, while \( \mathbf{D}_4 \) is the best-trained. According to the intuition outlined in Section 3, where the value of \( \delta_\text{opt} \) reflects the training quality, with \( \delta_\text{opt} = 1 \) indicating perfect training, we anticipate the relationship \( 1 < 
\delta_{\text{opt}, 4} <
\delta_{\text{opt}, 3} < \delta_{\text{opt}, 2} < \delta_{\text{opt}, 1} \). To observe this, the function \( \delta \mapsto \mathbb{E}_{X, \xi}\big[\|\mathbf{D}_\delta(x + \sigma \xi) - x\|^2\big] \) (computed as the empirical mean of the norm over the augmented dataset) is plotted for \( \sigma \in \{0.01, 0.07, 0.10, 0.12\} \). The experimental results align with and confirm this intuition as shown in Figure \ref{fig:opt-value-experiment}.
\begin{figure}
    \centering \includegraphics[width=0.7\linewidth]{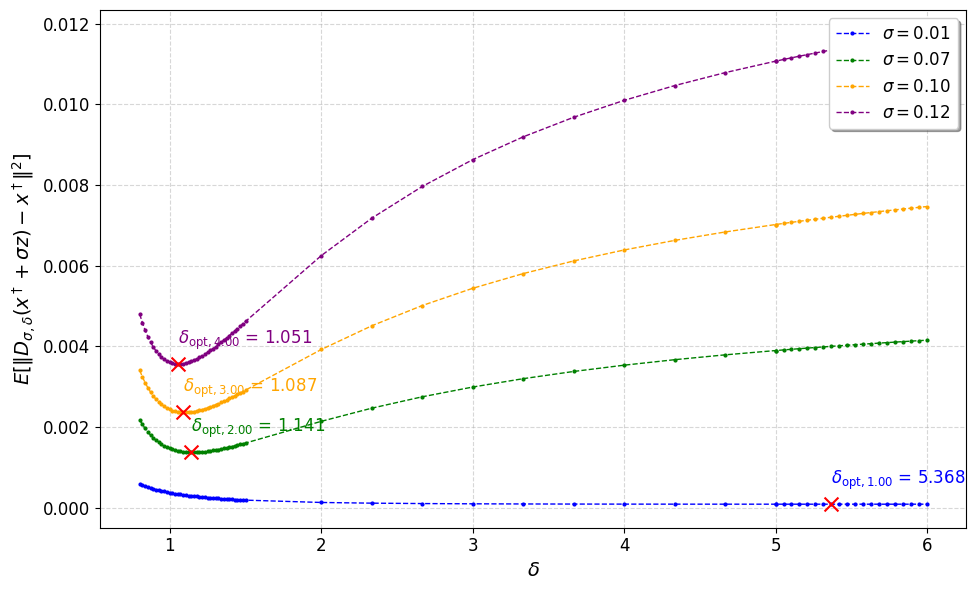}
    \caption{$\delta \mapsto \mathbb{E}_{X, \xi}\big[\|\mathbf{D}_\delta(x + \sigma\xi) - x\|^2\big]$ for $\sigma \in \{0.01, 0.07, 0.10, 0.12\}$}
    \label{fig:opt-value-experiment}
\end{figure}

%\textcolor{red}{comparer les valeurs de $\delta_{\text{opt}}$ pour leur approche et mq on a pas la même interprétation que pour le notre}

%What is interesting is that when the initial denoiser $\mathbf{D}$ is less well trained (which is the case for small noise levels $\sigma$), the Tweedie method captures more information than the Scaling method. Indeed, we can see from the previous graph that even when the denoiser is poorly trained, the performance of the Scaling method remains roughly similar to when the model is well trained (when $\sigma$ is large). This may be explained by the construction of the Tweedie denoiser, which is supposed to approximate the true theoretical denoiser when the noise level is small.
\vspace{-1.5em}
\subsection{Stability}
In the previous section, the following stability result was stated:  
\begin{align*}
    \text{\textit{Assuming that} } \mathbf{D}_\delta \text{ \textit{is contractive,} } x \mapsto \operatorname{PnP}(\delta, x) \text{ \textit{is continuous for all} } \delta > 0.
\end{align*}
This section provides an empirical evaluation of the continuity of the operator \( x \mapsto \operatorname{PnP}(\delta, x) \). A sequence \( (y_k)_k \) satisfying \( \|y_k - y\| \to 0 \text{ as } k \to +\infty \) is considered. More precisely, a clean image \( x \) is given, and \( y = \mathrm{A}x \) is considered, where \( \mathrm{A} \) corresponds to a physics operator for an inpainting task, where 20\% of the pixels in the image are masked. For all \( k \in \mathbb{N}^* \), \( y_k \) is defined as \( y_k = y + \frac{\sigma}{k}\xi \), where \( \xi \sim \mathcal{N}(0, \mathrm{I}_n) \). $\delta$ is here fixed to its optimal value for $\sigma=0.1$, i.e. as before to $1.09$. The convergence of \( (\operatorname{PnP}(\delta, y_k))_k \) towards \( \operatorname{PnP}(\delta, y) \) is then illustrated in Figure \ref{fig:stability}.

\begin{figure}
    \centering
    \includegraphics[width=0.7\linewidth]{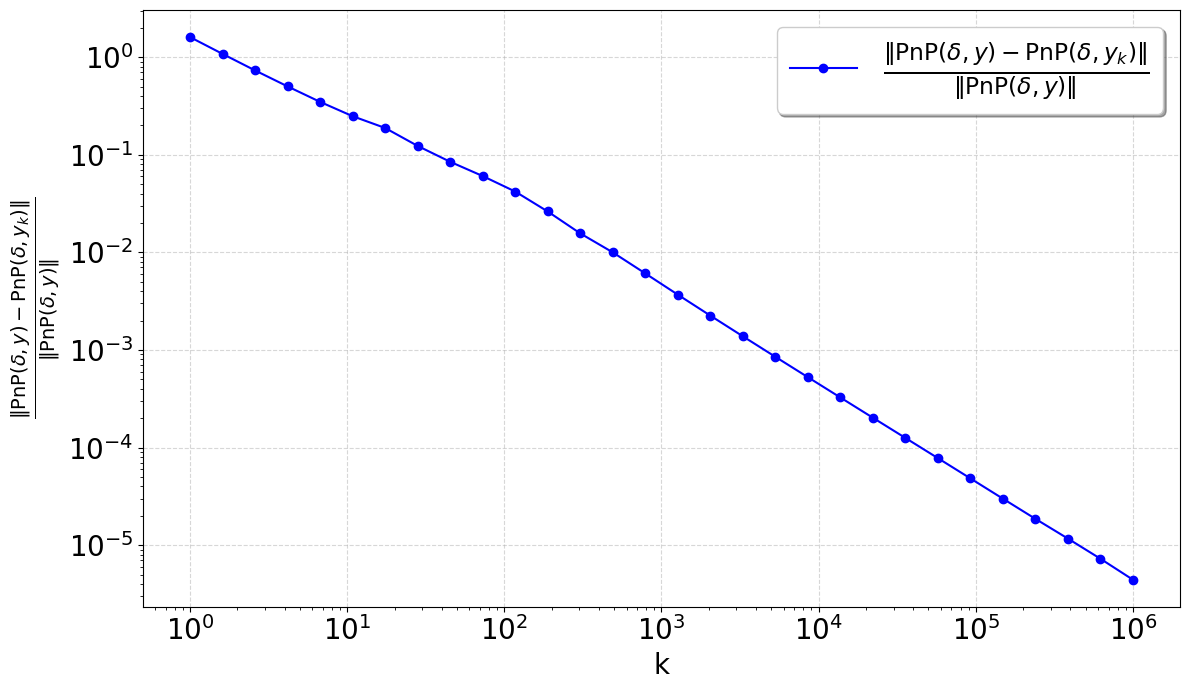} % Replace with the actual image file name
    \caption{Convergence of \( (\operatorname{PnP}(\delta, y_k))_k \) towards \( \operatorname{PnP}(\delta, y) \).}
    \label{fig:stability}
\end{figure}
\vspace{-1em}
\subsection{Empirical evidence of convergent regularisation}
\label{sec:empirical_conv_reg}
This section aims to (1) empirically demonstrate that the proposed Tweedie scaling method defines a convergent regularisation for the most basic tasks in PnP-PGD, and (2) show that this is not the case for the method proposed in \cite{denoiser-boosting}. Empirical results show that in the basic cases of denoising ($\mathrm{A} = \mathrm{I}_n$) and of inpainting, no convergence is observed for the homogeneous scaling approach, whereas convergence is observed for Tweedie scaling. The experiment consists of taking a clean image \( x^\dagger \) in the CBSD68 dataset and setting \( y_0 = \mathrm{A}x^\dagger \). A logarithmic range of parameters $\delta$ is then considered, with a step size $\tau = 1$, and $y_0$ is perturbed to $y_\delta = y_0 + \frac{\sigma}{\delta}\xi$, where $\xi \sim \mathcal{N}(0, \mathrm{I}_m)$. For each level of penalisation intensity $\delta$, $x^\dagger_\delta$ is computed as the output of the PnP-PGD algorithm using the Tweedie transformed denoiser $\mathbf{D}_\delta$ and a fixed number of 300 iterations. The initial image \( x_0 \) is a tensor with all values equal to 0 (dark image). Two quantities are plotted in Figure \ref{fig:data-consistency-conv}.
\begin{enumerate}
    \item \textit{Data consistency}: the convergence of $\|\mathrm{A}x^\dagger_\delta - \mathrm{A}x^\dagger\|/\|\mathrm{A}x^\dagger\|\to 0$ as $\delta \to +\infty$, to show that the limit of the sequence $(x^\dagger_\delta)_\delta$ satisfies $\mathrm{A}x^\dagger_\delta = \mathrm{A}x^\dagger$.
    \item \textit{Convergence}: the convergence of $(x^\dagger_\delta)_\delta$ towards a limit $x^*$ to show that there is a notion of convergence.
\end{enumerate}
Regarding the second point, note that the limit $x^*$ need not be the same as the clean image $x^\dagger$ when the measurements are incomplete, i.e.\ $\ker(\mathrm{A})\supsetneq \{0\}$. Indeed, just as in variational regularisation, where there is the notion of a $g$-minimising solution associated with \eqref{regularised-version} \cite{scherzer2009variational}, the regularisation can overcome the non-uniqueness of the solution to the inverse problem, but can only be expected to recover the true clean image if this image is represented well by the prior information encoded in the regularisation.

\begin{figure}
    \centering
    \begin{minipage}[b]{0.48\textwidth}
        \centering
        \begin{subfigure}[b]{1\linewidth}
            \centering
            \includegraphics[height=0.5\textwidth, width=1.0\textwidth]{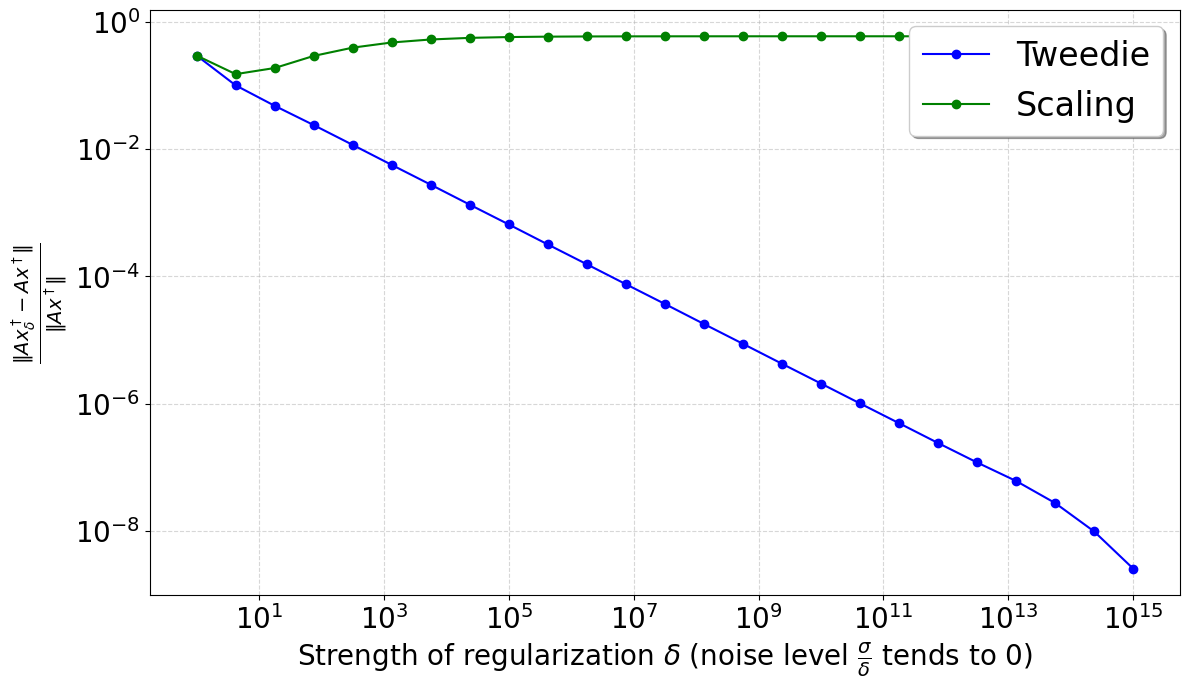}
            \caption{Inpainting: data consistency}
            \label{fig:clean-image}
        \end{subfigure}
        \vspace{0.5em} % Adds vertical space between images
        \begin{subfigure}[b]{1\linewidth}
            \centering
            \includegraphics[height=0.5\textwidth, width=1.0\textwidth]{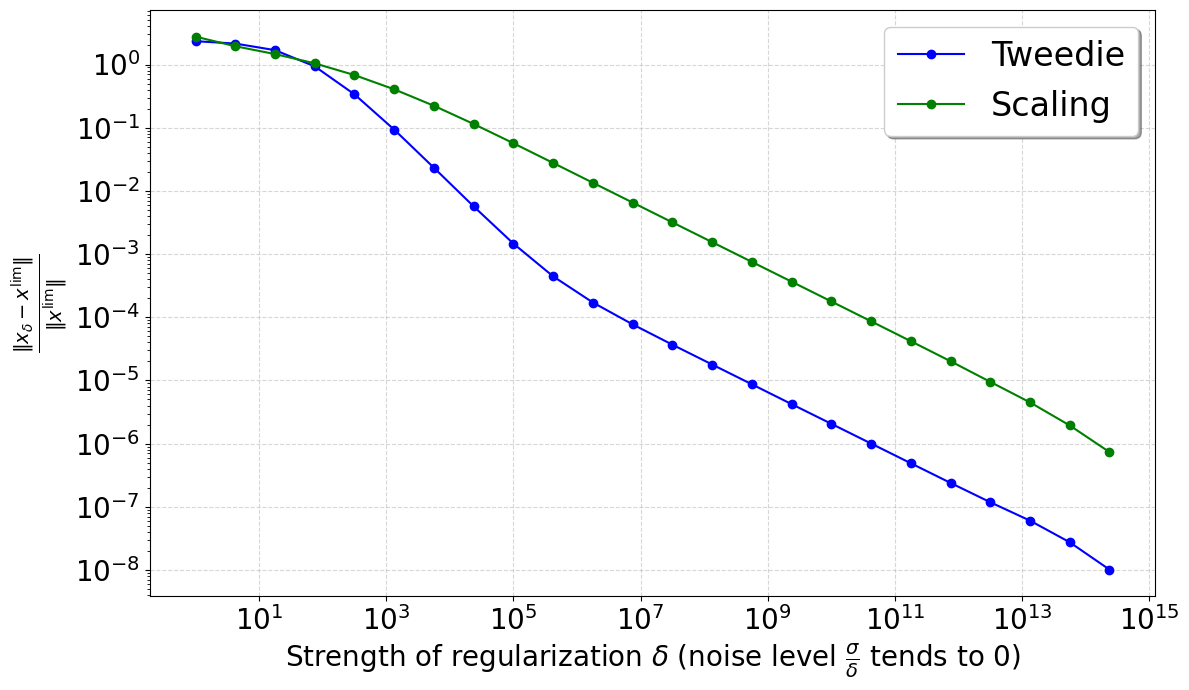}
            \caption{Inpainting: convergence}
            \label{fig:tweedie-denoised}
        \end{subfigure}
    \end{minipage}
    \hfill
    \begin{minipage}[b]{0.48\textwidth}
        \centering
        \begin{subfigure}[b]{1\linewidth}
            \centering
            \includegraphics[height=0.5\textwidth, width=1.0\textwidth]{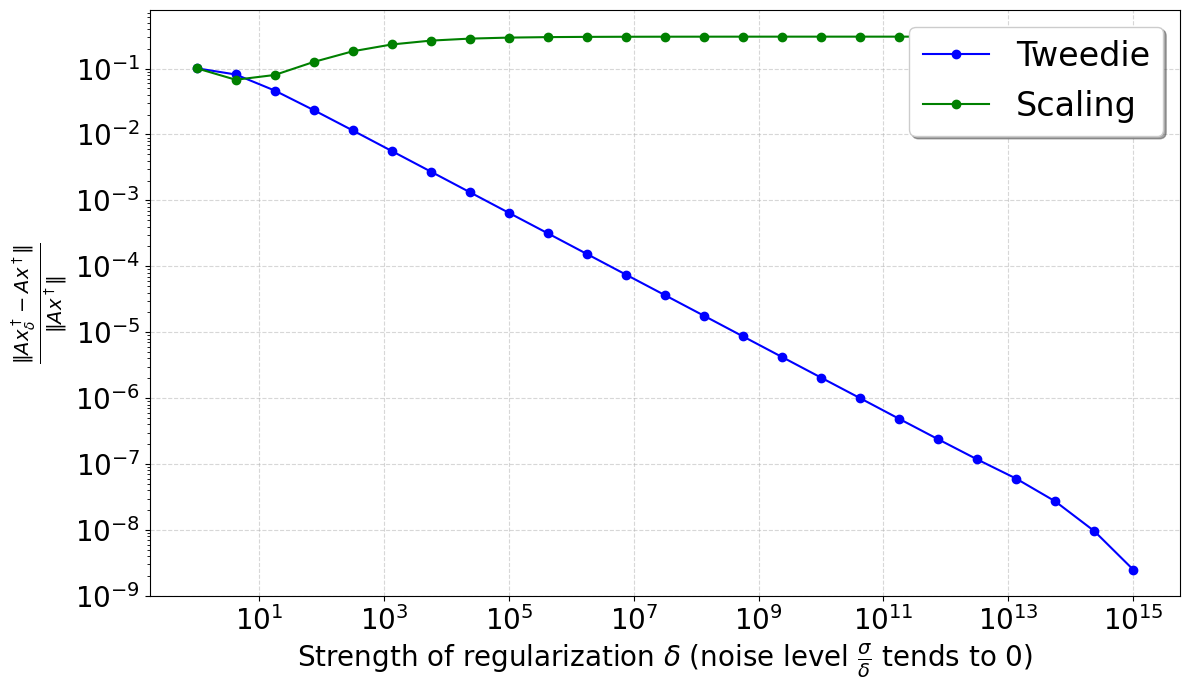}
            \caption{Denoising: data consistency}
            \label{fig:noisy-image}
        \end{subfigure}
        \vspace{0.5em} % Adds vertical space between images
        \begin{subfigure}[b]{1\linewidth}
            \centering
            \includegraphics[height=0.5\textwidth, width=1.0\textwidth]{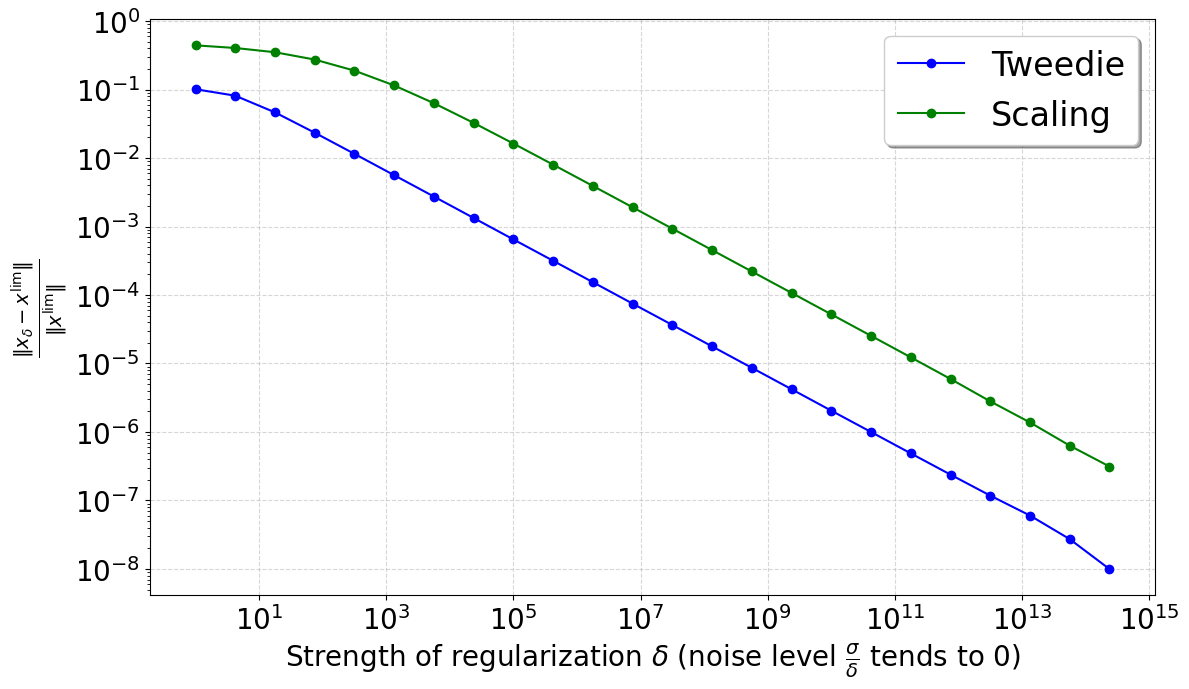}
            \caption{Denoising: convergence}
            \label{fig:scaling-denoised}
        \end{subfigure}
    \end{minipage}
    \caption{Data consistency and convergence (or lack thereof) for the Tweedie scaling method (blue) and homogeneous scaling (green) in Inpainting and Denoising.}
    \label{fig:data-consistency-conv}
\end{figure}
The fundamental premise underlying the construction of the homogeneous scaling method is the existence of a 1-homogeneous functional \( h \) such that \( \mathbf{D} = \operatorname{prox}_h \). Under this assumption, it is evident (by variational regularisation theory) that the  homogeneous scaling family \( (\mathbf{D}_\delta)_\delta \) defines a convergent regularisation. Consequently, if convergence of this regularisation is not observed, it implies that the assumption does not hold for the particular \( \mathbf{D} \) being trained, despite its otherwise generic nature. This observation highlights the limitations inherent in the homogeneous scaling approach.
\vspace{-1.5em}
\section{Conclusion}
\vspace{-0.5em}
The proposed Tweedie scaling method bridges the theoretical framework for convergent regularisation developed in \cite{haltmeier-ebner} by introducing a strongly interpretable scaling parameter. More specifically, it enables principled and explicit modulation of the regularisation intensity in pre-trained denoisers, without relying on restrictive assumptions about the form of the denoiser as in \cite{denoiser-boosting}. The scaling parameter is shown to provide a meaningful characterization of training quality in deep learning-based denoisers. Empirical evidence confirms the framework's ability to guarantee convergence in PnP algorithms. A future direction could involve proposing new scaling techniques to improve the proposed method, including a study of the rate of convergence of the regularisation, investigating whether the property of convergent regularisation holds for other PnP algorithms, and adapting or optimising the scaling parameter according to the algorithms. Other potential research avenues include that of relaxing the assumption of non-expansiveness of the initial denoiser, and considering the infinite-dimensional setting, as is usually the case in the theory of inverse problems.

%
% ---- Bibliography ----
%
% BibTeX users should specify bibliography style 'splncs04'.
% References will then be sorted and formatted in the correct style.
%
% \bibliographystyle{splncs04}
% \bibliography{mybibliography}
%

\bibliographystyle{splncs04}
\bibliography{references}
\end{document}